\newcommand{\be}{\begin{equation}}
\newcommand{\ee}{\end{equation}}
\newcommand{\ba}{\begin{eqnarray}}
\newcommand{\ea}{\end{eqnarray}}
\newcommand{\tr}{\operatorname{Tr}}
\newtheorem{theorem}{Theorem}
\def\>{\rangle}
\def\<{\langle}
\begin{document}
\title{Reversible computation and the causal structure of space-time}

\author{Anandamay Das Bhowmik}
\affiliation{Physics and Applied Mathematics Unit, Indian Statistical Institute, 203 BT Road, Kolkata- $700108$, India.}

\author{Preeti Parashar}
\affiliation{Physics and Applied Mathematics Unit, Indian Statistical Institute, 203 BT Road, Kolkata- $700108$, India.}

\begin{abstract}
Reversible algorithms play a crucial role both in classical and quantum computation. While for a classical bit the only nontrivial reversible operation is the bit-flip, nature is far more versatile in what it allows to do to a quantum bit. The reversible operations that a quantum computer can perform on a qubit are group of linear unitary transformations. However, laws of quantum mechanics prohibit implementation of anti-linear anti-unitary gates, even though they are perfectly reversible. Here we show that such a restriction on possible set of reversible operations is, remarkably, a fundamental constraint of spacetime structure. In particular, it will be shown that construction of any anti-linear anti-unitary gate will lead to violation of a fundamental causal primitive which, as we shall argue, is fundamentally different from the principle of relativistic causality.

\end{abstract}

\maketitle
{\it Introduction --} Two reversible operations one can perform on a classical bit - leaving it alone or flipping it. However, possibilities for reversible operations get richer when we go from a classical bit to a quantum bit. In fact, efficiency of quantum computational algorithms does heavily rely on reversible manipulation of  qubits. It is, in principle, possible to transform any generic irreversible computation into a reversible one and perform complex computation by means of reversible gates \cite{note, Nielsen}. However, the sole condition of reversibility does not imply that evolution must necessarily be linear unitary. Any invertible transformation that maps unit vectors of Hilbert space into unit vectors  should, in principle, be a potential candidate in the set of reversible operations and indeed it follows from the celebrated Wigner's theorem that such a set is large enough to contain both linear unitary and anti-linear anti-unitary transformations \cite{Wig,Peres1, Weinberg}. However, the dynamical postulates of quantum theory only allows the group of linear unitary transformations to be physically realizable. Despite difficulties in practical implementation, it is possible to physically realize a linear unitary transformation to an arbitrary degree of accuracy. However on the other hand, perfect implementation of an antilinear anti-unitary logic gate is impossible to achieve since they are not completely positive. This fundamental asymmetry between realizability of linear unitary and antilinear anti-unitary logic gates, though inherent in the mathematical formalism of quantum dynamics, is not clearly understood. Peres had argued to rule out the possibility of anti-linear anti-unitary evolution  from the requirement of continuity \cite{Peres2}, however the rational behind the assumption that physical evolution must necessarily belong to the continuous group of transformations requires more fundamental justification.

In this work, we show that perfect implementation of any anti-linear anti-unitary gate will lead to  violation of a fundamental causal primitive. (importantly, this causal primitive is different from the principle of relativistic causality). In brief, the causal primitive states that different possible time orders of any two spacelike separated operations, being reference frame dependent, cannot become a potential cause for distinct observations in their common causal future. In other words, it asserts that cause of any event cannot be observer dependent \cite{us}. Since in our proof we do not invoke any ingredient from the mathematical apparatus of Schrödinger dynamics, our result establishes that the impossibility of physical realization of antilinear anti-unitary gates is not just a mere artefact
of the current formalism of quantum dynamics, rather it is intimately connected with the causal structure of spacetime.

{\it The causal primitive --} The statement of the causal primitive can be formulated as the following : {\it If $A$ is the cause of $B$ for some observer, then $A$ will remain as the cause of $B$ for all observers}. Therefore, it demands that the cause $A$ of any event $B$ must be reference frame independent {\it i.e.} absolute. From now on, this causal primitive will be referred to as 'Absoluteness of Cause' (AC). It is worth mentioning here that although cause is usually associated with events in spacetime or operations performed in sufficiently small region of spacetime, in scenarios that involve multiple operations, the spacetime relation between those operations (for instance, their `time order') should also be regarded as a potential cause. To see that, consider two operations $\mathrm{E}_a(t_a,{\bf r}_a)$ and $\mathrm{E}_b(t_b,{\bf r}_b)$ carried out on a rigid body. Say $\mathrm{E}_a(t_a,{\bf r}_a)$ is a finite rotation performed on the rigid body about x-axis and $\mathrm{E}_b(t_b,{\bf r}_b)$ is a finite rotation performed on the same body about y-axis. The possible temporal orders are : either $t_a<t_b$ or $t_b<t_a$. Since finite rotations along different axis do not commute in general, these two different temporal orders of $\mathrm{E}_a(t_a,{\bf r}_a)$ and  $\mathrm{E}_b(t_b,{\bf r}_b)$ will cause distinct future configurations. Therefore, time order of two operations can be a potential cause for distinct future observations. Since in this example both the operations belong to the world-line of the same physical system, $\mathrm{E}_a(t_a,{\bf r}_a)$ and $\mathrm{E}_b(t_b,{\bf r}_b)$ are timelike separated and hence their time order is observer independent, and therefore, in this case, AC does not forbid this time order to be a cause for distinct future observations (since time order in this case is observer-independent). The situation is however different if the operations are spacelike separated. Consider $\mathrm{E}_a(t_a,{\bf r}_a)$ and $\mathrm{E}_b(t_b,{\bf r}_b)$ to be spacelike separated operations carried out on two spatially separated systems. The time order of such spacelike separated operations is  non-absolute \textit{i.e.} observer dependent - there exist different inertial frames where these operations take place in different time orders. The causal primitive AC prohibits time order of such spacelike separated operations (for not being absolute) to be a potential cause for distinct observations in their common future.

{\it Relativistic Causality versus AC --} 
Quantum theory is compatible with the principle of  AC. To see that, assume Alice and Bob share a bipartite quantum state $\rho_{AB} \in \mathcal{D(H_A \otimes H_B)}$.
Consider two spacelike separated quantum operations $\Lambda_A : \mathcal{D(H_A)}\xrightarrow{} \mathcal{D(H'_A)}$ and $\Lambda_B : \mathcal{D(H_B)}\xrightarrow{} \mathcal{D(H'_B)}$ on system A and B respectively \cite{note1}. Alice performs $\Lambda_A$ on her part of $\rho_{AB}$, whereas Bob performs $\Lambda_B$ on his part. Since to perform these local operations, they do not need to communicate with each other, $\Lambda_A$ and $\Lambda_B$ can be spacelike separated. The temporal order between them is observer dependent. The principle of AC prohibits the possible time orders of such spacelike separated operations to be a potential cause of distinct observations at their common future. Consider two possible time order of Alice and Bob's local operations.
\begin{itemize}
\item[1.] Alice performing $\Lambda_A$ on her system precedes Bob performing $\Lambda_B$ on his system. The joint state $\rho_{AB}$ will evolve to $(\mathbb{I}_A\otimes \Lambda_B)(\Lambda_{A} \otimes \mathbb{I}_B)\rho_{AB} = \tau_{AB}$.
\item[2.] Bob performing $\Lambda_B$ on his part precedes Alice performing $\Lambda_A$ on her part. In this case, the joint state $\rho_{AB}$ evolves to $(\Lambda_{A} \otimes \mathbb{I}_B)(\mathbb{I}_A\otimes \Lambda_B)\rho_{AB} = \tau'_{AB}$
\end{itemize}
However, Quantum theory ensures that $\tau_{AB} = \tau'_{AB} ;~\forall~ \rho_{AB} \in \mathcal{D(H_A \otimes H_B})$, since $
(\mathbb{I}_A\otimes \Lambda_B)(\Lambda_{A} \otimes \mathbb{I}_B) = (\Lambda_A \otimes \Lambda_B) = (\Lambda_{A} \otimes \mathbb{I}_B)(\mathbb{I}_A\otimes \Lambda_B)$.
Therefore, through commutation of $(\mathbb{I}_A\otimes \Lambda_B)$ and $(\Lambda_{A} \otimes \mathbb{I}_B)$, quantum theory ensures that different time-orders of any two spacelike quantum operations does not lead to different final joint states. Hence time-order of two spacelike quantum operations cannot cause distinct observations in their common future, thus respecting the principle of AC.

It is well-known that despite allowing nonlocal correlations between spatially separated subsystems, quantum theory maintains 'peaceful co-existence' with the principle of relativistic causality (RC) \cite{Brunner}. However, AC and RC are not equivalent. To see how quantum theory obeys RC, consider again a situation where Alice and Bob share a bipartite quantum state $\rho_{AB}$. The marginal density matrix of Bob's system is given by $Tr_A(\rho_{AB}) : = \mu_B$. Now we assume
that Alice has performed a local quantum operation $\Lambda_A$ on her subsystem of the composite state $\rho_{AB}$. As a result of Alice's local operation $\Lambda_A$, the joint state $\rho_{AB}$ will evolve to $ \rho'_{AB} = (\Lambda_{A} \otimes \mathbb{I}_B) \rho_{AB} = \sum_k A_k \otimes \mathbb{I}_B (\rho_{AB}) A_k^{\dagger} \otimes \mathbb{I}_B.$ where $\{A_k\}$ are the Kraus elements corresponding to $\Lambda_A$. The state of Bob's subsystem after Alice's operation is given by $\mu'_B = Tr_A(\rho'_{AB})$. If $\mu'_B \neq \mu_B$, Alice can instantaneously signal to B, thus violating RC. However, quantum theory ensures $\mu'_B = \mu_B$, thus prohibiting instantaneous communication. $\mu'_B  = Tr_A(\sum_{k} A_k \otimes \mathbb{I}_B ( \rho_{AB} ) A_k^{\dagger} \otimes \mathbb{I}_B) = Tr_A (\sum_k A_k^{\dagger} A_k \otimes \mathbb{I}_B (\rho_{AB})) = Tr_A(\rho_{AB}) = \mu_B$.
Therefore, the reduced density matrix of Bob's system remains unaffected by any local operation carried out by Alice on her system, thus maintaining RC. The violation of AC does not necessarily imply violation of RC (see supplemental for more detail). In the next two sections we will show that antilinear anti-unitary logical manipulation of qubits are in direct contradiction with AC while completely compatible with RC. 

{\it Antilinear anti-unitary logic and RC --}
We consider the following question: can the allowed
set of reversible quantum gates be derived from the principle of relativistic
causality? In particular, we ask, whether or not, existence of perfect antilinear anti-unitary gates leads to violation of RC, so that their existence can be ruled out not only from mathematical formalism of quantum dynamics but also from theory-independent generic physical principle such as RC. Here we answer this question in the negative - although it is well known that RC can rule out several unphysical information processing tasks ( for instance, No-cloning, No-deleting etc can be derived as a consequence of RC \cite{Hardy, Gisin, Pati,Bey,Chattopadhyaya06}). Consider the generic bipartite scenario with Alice and Bob, where Alice and Bob are two distant parties sharing some entangled state $\rho_{AB}$. The reduced density matrix at Bob's site is the mixed state $Tr_A(\rho_{AB}) := \rho_B$. Now one crucial non-classical feature of quantum mechanics is that it allows a mixed state to be prepared in many different ways \textit{i.e.} as mixtures of different ensemble of pure states. So $\rho_B$
admits many decomposition as follows : $\rho_B = \sum_i p_i \ket{\psi_i}\bra{\psi_i} = \sum_j q_j \ket{\phi_i}\bra{\phi_i}$. Furthermore, due to GHJW theorem, these different decompositions of $\rho_B$ can be prepared remotely by Alice by choosing appropriate measurements on her local system $A$ \cite{GHJW}. Suppose if Alice performs $M_1$ or $M_2$ measurement on $A$, at Bob's site the decomposition $\sum_i p_i \ket{\psi_i}\bra{\psi_i}$ or  $\sum_j q_j \ket{\phi_i}\bra{\phi_i}$ will be prepared respectively.
Now consider Bob possesses a device $\mathcal{E}$ that performs some operation  $\mathcal{E} : \ket{\psi}\bra{\psi} \xrightarrow[]{}\mathcal{E}({\ket{\psi}\bra{\psi}})$. Now if Alice measures $M_1$, then after Bob's application of $\mathcal{E}$ on his system, the final mixture at Bob's end happens to be
$\sum_i p_i \mathcal{E}(\ket{\psi_i}\bra{\psi_i}) := \sigma_1$. On the other hand, if Alice performs $M_2$ measurement, then Bob's application of $\mathcal{E}$ on his system will lead to the final ensemble  $\sigma_2 : = \sum_i q_i \mathcal{E}(\ket{\phi_i}\bra{\phi_i})$. Now if $\sigma_1 \neq \sigma_2$, then Bob can learn which decomposition of $\rho_B$ has been prepared remotely by Alice, therefore the device $\mathcal{E}$ enables Bob to discern which measurement has been chosen by Alice, which implies violation of RC. Now consider $\mathcal{E}$ to be a generic antilinear antiunitary gate. We need to check whether such gates can be used by Bob to learn which decomposition has been prepared remotely by Alice. Here we show that such gates are of no use for Bob to learn which particular decomposition of $\rho_B$ Alice has remotely prepared : A generic anti-linear anti-unitary gate $\mathcal{E}$ can be written as $\mathcal{E} = \mathcal{U}\mathcal{K}$ where $\mathcal{U}$ is some linear unitary operator and $\mathcal{K}$ is complex conjugation operator defined by $\mathcal{K} \ket{\psi} = \ket{\psi}^*$, where $\ket{\psi}^*$ is complex conjugate of $\ket{\psi}$. Now consider any two arbitrary decompositions of Bob's density matrix $\rho_B$ as follows : $\rho_B = \sum_i p_i \ket{\psi_i}\bra{\psi_i} = \sum_j q_j \ket{\phi_j}\bra{\phi_j}$. Bob can discern which of these ensembles has been remotely prepared by Alice if the final ensembles obtained after Bob applies $\mathcal{E}$ on his system yields different measurement statistics. The first ensemble maps to $\sigma'_B = \sum_i p_i \mathcal{E}(\ket{\psi_i}\bra{\psi_i}) =  \sum_i p_i \mathcal{U}\mathcal{K} \ket{\psi_i}\bra{\psi_i} \mathcal{K}^*\mathcal{U}^{\dagger} = \mathcal{U}\mathcal{K} (\sum_i p_i \ket{\psi_i}\bra{\psi_i}) \mathcal{K}^* \mathcal{U}^{\dagger} = \mathcal{U}\mathcal{K} (\rho_B) \mathcal{K}^*\mathcal{U}^{\dagger}$. Similarly, the second ensemble maps to $\sigma''_B = \sum_i q_i \mathcal{E}(\ket{\phi_i}\bra{\phi_i}) = \mathcal{U}\mathcal{K} (\rho_B) \mathcal{K}^*\mathcal{U}^{\dagger}$. Therefore, $\sigma'_B = \sigma''_B$. Hence with any such operation $\mathcal{E} = \mathcal{U}\mathcal{K}$, Bob cannot discern which decomposition of $\rho_B$ has been prepared by Alice remotely, thereby discarding the possibility of instantaneous communication. \\

{\it Principle of AC and antilinear antiunitary gates :--} We prove a generic theorem that perfect implimentation of any anti-unitary antilinear gate, though does not violate RC, leads to the violation of AC. Before proceeding, we remind that any anti-unitary antilinear gate $\mathcal{E}$ can be decomposed as a product of a linear unitary gate and complex conjugation gate , \textit{i.e.} $\mathcal{E} = \mathcal{U}\mathcal{K}$. Our first theorem establishes the contradiction between existence of the universal complex conjugation gate $K$ and the causal primitive AC.  

\begin{theorem}\label{theo1}
Any dynamical formulation of quantum theory that allows a universal complex conjugation gate necessarily violates the principle of AC.  	
\end{theorem}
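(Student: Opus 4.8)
The plan is to convert the anti-linearity of $K$ directly into a frame-dependent cause. I would let Alice and Bob share an entangled state $\ket{\Psi}_{AB}=\sum_{ij}c_{ij}\,\ket{i}_A\ket{j}_B$, written in the fixed product basis in which $K$ acts as conjugation, with the amplitudes $c_{ij}$ chosen genuinely complex. Alice is given an ordinary linear unitary $U_A=U\otimes\mathbb{I}_B$ (always realisable in quantum theory, as recalled in the Introduction) having at least one non-real matrix element $U_{ki}$, while Bob is given the hypothesised gate $K_B=\mathbb{I}_A\otimes K$. Because executing these two local operations requires no communication, they can be performed at spacelike separation, so their time order is observer dependent; AC then forbids the two orders from producing distinct observations in their common causal future.

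Next I would compute the global output for each time order. If Alice acts first and Bob second, the output is $K_B U_A\ket{\Psi}$; since $K_B$ is anti-linear it conjugates every amplitude it encounters, including those just generated by $U$, giving $\sum_{ijk}c_{ij}^{*}U_{ki}^{*}\,\ket{k}_A\ket{j}_B$. If Bob acts first and Alice second, the output is $U_A K_B\ket{\Psi}=\sum_{ijk}c_{ij}^{*}U_{ki}\,\ket{k}_A\ket{j}_B$, because the subsequent linear $U_A$ never conjugates $U$'s own entries. The two results differ exactly by $U_{ki}^{*}$ versus $U_{ki}$, so $K_B$ and $U_A$ fail to commute whenever $U$ is non-real --- in sharp contrast to the purely linear identity $(U\otimes\mathbb{I}_B)(\mathbb{I}_A\otimes V)=(\mathbb{I}_A\otimes V)(U\otimes\mathbb{I}_B)$ that underlies quantum theory's own compliance with AC. A clean witness is $\ket{\Psi}=\frac{1}{\sqrt{2}}(\ket{00}+i\ket{11})$ with $U$ the phase gate $\ketbra{0}{0}+i\ketbra{1}{1}$: the two orders then return $\frac{1}{\sqrt{2}}(\ket{00}-\ket{11})$ and $\frac{1}{\sqrt{2}}(\ket{00}+\ket{11})$, which are orthogonal.

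Since these two global states are orthogonal (and unequal for generic data), a joint measurement in the common causal future distinguishes them with certainty, so the purely frame-dependent time order has become the cause of distinct observations and AC is violated. I would then verify that relativistic causality is untouched: both orderings leave Bob's marginal, and Alice's, maximally mixed, so no signal is transmitted, reinforcing that AC is strictly stronger than RC exactly as claimed earlier. The main obstacle is conceptual rather than computational --- one must argue that a genuinely \emph{local} implementation of $K$ on Bob's subsystem can act on the joint entangled state only as the anti-linear map that conjugates all amplitudes in the fixed basis, because anti-linearity cannot be confined to a single tensor factor. Once this is the agreed meaning of $K_B$, its failure to commute with Alice's linear gate is automatic and the violation of AC follows immediately.
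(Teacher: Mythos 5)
Your computation is internally consistent, but it rests on an assumption that the theorem does not grant you, and which the paper's proof is specifically engineered to avoid: you stipulate that Bob's local gate $K$, when applied to his half of an entangled state, acts on the joint state as the anti-linear extension that conjugates \emph{all} amplitudes in the fixed product basis. You flag this yourself as ``the main obstacle'' and then settle it by fiat (``anti-linearity cannot be confined to a single tensor factor''). But the universal conjugation gate is defined only by its action $\ket{\psi}\mapsto\ket{\psi}^*$ on single-qubit pure states; a ``dynamical formulation that allows the gate'' is free to assign \emph{any} rule whatsoever for what the device does when its input qubit is entangled with a distant system. The paper makes exactly this point: it leaves the post-$\mathcal{K}$ joint state $\Omega_{AB}$ completely unconstrained, and derives the contradiction from AC alone, by showing that the three constraint sets $\mathcal{A}_1,\mathcal{A}_2,\mathcal{A}_3$ (obtained from Alice measuring $\sigma_z$, $\sigma_x$, $\sigma_y$ in the opposite time order) have empty intersection. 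Your argument rules out only the one formulation in which the entangled-input action is global conjugation; every other conceivable assignment survives it, so the theorem as stated (``\emph{any} dynamical formulation'') is not proved. It also forfeits the paper's central claim, announced in the introduction, that no ingredient of Schr\"odinger-type dynamics (i.e., no postulated linear or anti-linear extension to composite systems) is invoked.

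Moreover, the particular extension you postulate is self-undermining as a model of a \emph{local} operation: global conjugation maps Alice's marginal $\rho_A$ to its entrywise conjugate $\rho_A^*$, which differs from $\rho_A$ whenever $\rho_A$ has a $\sigma_y$ component (take $\ket{\Psi}=\alpha\ket{y}_A\ket{0}_B+\beta\ket{\bar{y}}_A\ket{1}_B$ with $|\alpha|\neq|\beta|$, where $\ket{y}=(\ket{0}+i\ket{1})/\sqrt{2}$). Under your rule, Bob's gate would therefore already signal to Alice, violating RC --- contrary both to your closing check (which holds only because your chosen state is maximally entangled) and to the paper's separate result that the conjugation gate is RC-compatible. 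This is a symptom of the underlying problem: once you decree a specific, effectively nonlocal extension, the discrepancy between the two orderings can be blamed on that decree rather than on the gate itself, which is precisely the loophole the paper's assumption-free, three-scenario argument is built to close.
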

\begin{proof}
Consider a bipartite entangled state $\rho_{ab} = \ket{\psi^-}_{ab}\bra{\psi^-}$ shared between two distant parties, say Alice and Bob, where $\ket{\psi^-}_{ab}:= (\ket{0}_a\ket{1}_b-\ket{1}_a\ket{0}_b)/\sqrt{2}$; $\{\ket{0},\ket{1}\}$ be the computational basis states \textit{i.e.} the normalized eigenkets of $\sigma_z$ Pauli matrix. Alice and Bob can prepare this singlet state in their common past and then become spatially separated with their respective subsystems, as considered in seminal Einstein-Rosen-Podolsky (EPR) thought experiment \cite{EPR35}. In contrary to Theorem 1, assume there exists a  device that acts as a universal complex conjugation gate $\mathcal{K}$ and Bob possesses this device, with which we will analyze the following three scenarios. Before proceeding let us explicitly note the action of $\mathcal{K}$ on three basis states : $\ket{i} \xrightarrow[]{\mathcal{K}} \ket{i}$ for $i \in \{0,1\}$; $\ket{x} \xrightarrow[]{\mathcal{K}}\ket{x}$ and $\ket{\bar{x}} \xrightarrow[]{\mathcal{K}}\ket{\bar{x}}$ where $\ket{x} = 1/\sqrt{2}(\ket{0} + \ket{1})$, $\ket{\bar{x}} = 1/\sqrt{2}(\ket{0} - \ket{1})$ and $\ket{y} \xrightarrow[]{\mathcal{K}} \ket{\bar{y}}$ where $\ket{y}:= 1/\sqrt{2} (\ket{0} + i\ket{1})$ and $\ket{\bar{y}}:= 1/\sqrt{2} (\ket{0} - i\ket{1})$.

{\it Scenario- $1$ :} Consider two spacelike separated local quantum operations by Alice and Bob : (1) Alice performs $\sigma_z$ measurement on her part of singlet and (2) Bob performs $\mathcal{K}$ on his subsystem (see Figure $1$ in supplemental). Consider the time-order (A) : in which Alice first performs $\sigma_z$ measurement on her part of the composite system and then Bob applies $\mathcal{K}$ on his part of the system. After Alice's measurment, the singlet state collapses to either $\ket{0}_a\bra{0} \otimes \ket{1}_b\bra{1}$ or to $\ket{1}_a\bra{1} \otimes \ket{0}_b\bra{0}$ with equal probability. Therefore, the post-measurement state $\tau_{ab}$ is
\footnotesize
$$\tau_{AB}=\frac{1}{2}(\ket{0}_A\bra{0}\otimes\ket{1}_B\bra{1}+\ket{1}_A\bra{1}\otimes\ket{0}_B\bra{0}),$$
\normalsize
Bob applies $\mathcal{K}$ on his part after Alice's measurement. If Alice's measurement makes the singlet to collapse into $\ket{0}_A\bra{0} \otimes \ket{1}_B\bra{1}$, then due to action of $\mathcal{K}$ on Bob's side $\ket{0}_A\bra{0} \otimes \ket{1}_B\bra{1}$ remains $\ket{0}_A\bra{0} \otimes \ket{1}_B\bra{1}$ (since $\ket{1} \xrightarrow[]{K} \ket{1}$). Similarly, if the singlet collapses to $\ket{1}_A\bra{1} \otimes \ket{0}_B\bra{0} $, after Bob's application of $\mathcal{K}$ on his part $\ket{1}_A\bra{1} \otimes \ket{0}_B\bra{0} $ remains $\ket{1}_A\bra{1} \otimes \ket{0}_B\bra{0} $. Therefore, the final joint state $\tau'_{ab}$ happens to be equal to $\tau_{ab}$  $\textit{i.e.}$ equal probabilistic mixture of  $\ket{0}_A\bra{0} \otimes \ket{1}_B\bra{1}$ and $\ket{1}_A\bra{1} \otimes \ket{0}_B\bra{0}$
\footnotesize
$$\tau'_{AB}= \tau_{AB} = \frac{1}{2}(\ket{0}_A\bra{0}\otimes\ket{1}_B\bra{1}+\ket{1}_A\bra{1}\otimes\ket{0}_B\bra{0}).$$
\normalsize
Now consider the opposite time ordering of the concerned operations (B) in which Bob's application of $\mathcal{K}$ on $B$ precedes Alice's $\sigma_z$ measurement on $A$. Let after the action of Bob the updated joint state is $\Omega_{ab}$. The definition of $\mathcal{K}$ only provides the action of it on qubit in pure states, but not on a qubit which is entangled to some other system is not known. ( however, if we had asssumed Schrodinger type linearity then it would be straightforward to determine $\mathcal{K}$'s action on an entangled qubit.) Since we do not make any such additional assumption regarding the action of $\mathcal{K}$ when it is applied on qubit entangled with other system, the possibilities for $\Omega_{ab}$ is large. However, due to AC, $\Omega_{AB}$ cannot be arbitrary. It has to satisfy the requirement that after Alice measures $\sigma_z$ on her part of $\Omega_{ab}$, the final joint state should be same as $\tau'_{ab}$ (since, due to AC, time ordering of the spacelike separated events (Alice's $\sigma_z$ measurement and Bob's application of device $\mathcal{K}$) cannot lead to an observable distinction. The solution set for $\Omega_{AB}$ compatible with $AC$ principle is the following : 
\footnotesize
\begin{align}\label{eq1}
\!\begin{aligned}
\mathcal{A}_1 : = \mbox{CH}\left\{\ket{\epsilon_k}\bra{\epsilon_k}\right\},~~~~~~~~~~~~~~~~~~~~~~~~~~~~\\
\ket{\epsilon_k}:=\frac{1}{\sqrt{2}}\left(\ket{0}_A\ket{1}_B+e^{i\alpha_k}\ket{1}_A\ket{0}_B\right)~\&~\alpha_k\in[0,2\pi)
\end{aligned},	
\end{align}
\normalsize
where $\mbox{CH}\{X\}$ denotes the convex hull of a set X. Crucially, note that the solution set $\mathcal{A}_1$ contains all two-qubit states (pure and mixed) that yield perfectly anti-correlated outcomes if both the parties measure in the z-basis \textit{i.e.} performs $\sigma_z$ measurement, and yield completely random outcome locally.

{\it Scenario $2$ :} Consider a scenario similar to the previous one but with the exception that Alice performs $\sigma_x$ measurement. Here also we consider two different time order of Alice and Bob's local operations. In the time order where Bob first applies $\mathcal{K}$ on his part of the composite state $\ket{\psi^-}_{ab}$ and then Alice performs $\sigma_x$ measurement on her part then, AC requires that the final joint state should be same as in the other time order where Alice's $\sigma_x$ measurements precedes Bob's application of $\mathcal{K}$. Due to this requirement imposed by AC, the solution set for $\Omega_{AB}$ turns out to be the following :
\footnotesize
\begin{align}\label{eq2}
\!\begin{aligned}
\mathcal{A}_2 : = \mbox{CH}\left\{\ket{\xi_k}\bra{\xi_k}\right\},~~~~~~~~~~~~~~~~~~~~~~~~~~~~\\
\ket{\xi_k}:=\frac{1}{\sqrt{2}}\left(\ket{x}_A\ket{\bar{x}}_B+e^{i\beta_k}\ket{\bar{x}}_A\ket{x}_B\right)~\&~\beta_k\in[0,2\pi)
\end{aligned}.	
\end{align}
\normalsize   
Here the solution set $\mathcal{A}_2$ contains all the two-quibt states that yield perfectly anti-correlated outcomes in x-basis with local statistics completely random.
The set $\mathcal{A}_1$ and set $\mathcal{A}_2$ have only one common solution which is the state $\ket{\psi^-}_{ab}:=(\ket{0}_a\ket{1}_b-\ket{1}_a\ket{0}_b)/\sqrt{2}$, corresponding to $\alpha_k=\pi$ in (\ref{eq1}) and $\beta_k=\pi$ in (\ref{eq2})

{\it Scenario $3$:} Here Alice's measurement is Pauli $\sigma_y$ measurement. Like the earlier two cases considering both time-orders of Alice's measurement and Bob's application of $\mathcal{K}$, AC allows following possibilities for the state $\Omega_{ab}$ ,
\footnotesize
\begin{align}\label{eq3}
\!\begin{aligned}
\mathcal{A}_3 := \mbox{CH}\left\{\ket{\zeta_k}\bra{\zeta_k}\right\},~~~~~~~~~~~~~~~~~~~~~~~~~~~~\\
\ket{\zeta_k}:=\frac{1}{\sqrt{2}}\left(\ket{y}_A\ket{y}_B+e^{i\delta_k}\ket{\bar{y}}_A\ket{\bar{y}}_B\right)~\&~\delta_k\in[0,2\pi)
\end{aligned},	
\end{align}
\normalsize  
Here the set $\mathcal{A}_3$ contains all two-qubit states that yield perfect correlation in y-basis \textit{i.e.} $\sigma_y$ measurement with local statistics completely random. Impotantly,
the set $\mathcal{A}_3$ does not contain the state $\ket{\psi^-}$ which happens to be the common solution for the earlier two cases. In other words, these three different considerations do not allow any common solution for $\Omega_{ab}$, \textit{i.e.} $\mathcal{A}_1 \cap \mathcal{A}_2 \cap \mathcal{A}_3 =\emptyset $, which implies existence of a universal complex conjugation gate cannot exist if AC is satisfied. This completes the proof.     
\end{proof}

From theorem $1$, we can immediately arrive at our main theorem which establishes that existence of any perfect antilinear anti-unitary gate leads to violation of AC.

\begin{theorem}\label{theo2}
Any dynamical formulation of quantum theory that allows perfect implementation of antilinear anti-unitary gates necessarily violates the principle of AC. 

\end{theorem}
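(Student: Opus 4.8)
The plan is to reduce Theorem~\ref{theo2} to Theorem~\ref{theo1} by manufacturing the universal complex conjugation gate $\mathcal{K}$ out of an arbitrary antilinear anti-unitary gate together with an ordinary linear unitary. The starting point is the decomposition recalled just before the statement: every antilinear anti-unitary gate $\mathcal{E}$ factorises as $\mathcal{E}=\mathcal{U}\mathcal{K}$, with $\mathcal{U}$ linear unitary and $\mathcal{K}$ complex conjugation in the computational basis. I would fix this basis from the outset so that the $\mathcal{K}$ appearing in the factorisation is \emph{identically} the gate analysed in Theorem~\ref{theo1}; the unitary factor is then uniquely determined, $\mathcal{U}=\mathcal{E}\mathcal{K}$ (using $\mathcal{K}^2=\mathbb{I}$), and is itself invertible. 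This normal form, guaranteed by Wigner's theorem, is available for \emph{every} such $\mathcal{E}$.

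First I would assume, for contradiction, that some antilinear anti-unitary gate $\mathcal{E}$ admits perfect physical implementation. Since linear unitary transformations are uncontroversially realisable to arbitrary accuracy in any dynamical formulation of quantum theory, the inverse $\mathcal{U}^{\dagger}$ of the unitary factor is available as a bona fide gate. Composing it after $\mathcal{E}$ gives
\begin{equation}
\mathcal{U}^{\dagger}\circ\mathcal{E}=\mathcal{U}^{\dagger}\mathcal{U}\mathcal{K}=\mathcal{K},
\end{equation}
because $\mathcal{U}^{\dagger}\mathcal{U}=\mathbb{I}$. Thus perfect implementation of a single antilinear anti-unitary gate, together with the freely available linear unitaries, yields a perfect universal complex conjugation gate acting on every qubit state.

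The final step is to invoke Theorem~\ref{theo1}: the existence of such a universal $\mathcal{K}$ is incompatible with AC, as witnessed by the mutually disjoint solution sets with $\mathcal{A}_1\cap\mathcal{A}_2\cap\mathcal{A}_3=\emptyset$. Hence the assumed $\mathcal{E}$ cannot be realised without violating AC, which is the desired conclusion.

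I expect the only genuine subtlety, rather than any computational obstacle, to reside in the word \emph{universal}. The reduction above produces a $\mathcal{K}$ that must be complex conjugation with respect to the \emph{same} basis used in Theorem~\ref{theo1} and must act on all inputs, not on a restricted family, for the three-scenario argument to apply verbatim. Fixing the computational basis in the Wigner factorisation handles exactly this, since then the manufactured $\mathcal{K}$ coincides with the gate of Theorem~\ref{theo1}; any other basis choice merely relabels the three measurement directions and leaves the empty-intersection conclusion intact.
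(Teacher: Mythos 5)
Your proposal is correct and follows the same core strategy as the paper's proof --- reduce to Theorem~\ref{theo1} via the factorisation $\mathcal{E}=\mathcal{U}\mathcal{K}$ --- but the reduction step itself is carried out differently, and your version is the tighter one. The paper asserts that ``implementing any such $\mathcal{E}$ includes implementing $\mathcal{K}$ followed by the corresponding $\mathcal{U}$,'' i.e.\ it assumes that any physical realisation of $\mathcal{E}$ contains a realisation of $\mathcal{K}$ as an internal stage; taken literally this is an unjustified assumption about how the device is built, since a black box acting as $\mathcal{E}$ need not pass through complex conjugation internally. You run the reduction in the opposite (and logically safe) direction: granting that linear unitaries are implementable, the composite $\mathcal{U}^{\dagger}\circ\mathcal{E}=\mathcal{U}^{\dagger}\mathcal{U}\mathcal{K}=\mathcal{K}$ is an operationally available device that acts as complex conjugation, in the fixed computational basis, on every pure qubit state --- which is precisely all that the three-scenario argument of Theorem~\ref{theo1} uses, since Bob's gate is only ever evaluated on pure states after Alice's measurement collapses the singlet, while its action on an entangled half is left unconstrained except by AC. Bob simply substitutes this composite device for $\mathcal{K}$, and the empty intersection $\mathcal{A}_1\cap\mathcal{A}_2\cap\mathcal{A}_3=\emptyset$ delivers the contradiction with AC. Your closing remark about fixing the basis in the Wigner factorisation, so that the manufactured gate is identically the $\mathcal{K}$ of Theorem~\ref{theo1}, addresses a point the paper glosses over and is what makes the reduction apply verbatim rather than merely up to relabelling of the measurement axes. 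In short, the paper's phrasing buys brevity; your construction buys soundness, replacing an assertion about the internal structure of an $\mathcal{E}$-device with a composition of operations that are actually available.
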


\begin{proof}
In theorem $1$, we have already established that AC rules out the possibility of the universal complex conjugation gate. Now, any antilinear anti-unitary $\mathcal{E}$ transformation is the complex conjugation operation $\mathcal{K}$ followed by a unitary transformation $\mathcal{U}$ \textit{i.e.} $\mathcal{E} = \mathcal{U}\mathcal{K}$. Though unitary transformation $\mathcal{U}$ is a
valid physical reversible evolution consistent with both RC and AC, however, implementing any such $\mathcal{E}$ includes implementing $\mathcal{K}$ followed by the corresponding $\mathcal{U}$. However, as established in theorem $1$, perfect implementation of $\mathcal{K}$ contradicts AC (though perfectly consistent with RC). Hence, $\mathcal{E}$ cannot be a valid reversible evolution in any generic (as well as the standard ) formulation of quantum theory where AC is respected. \end{proof} 

\textit{POPT states and allowed reversible operations} :--
A set of positive operators $\{\pi^a_A\}_a$ satisfying $\sum_a\pi^a_A=\mathbb{I}_A$ constitute a generic POVM measurement; where $\forall~a,~\pi^a_A\in\mathcal{E}(\mathcal{H}_A)$ with $\mathcal{E}(\mathcal{H}_A)$ ( Effect space ) denotes set of all positive operators on $\mathcal{H}_A$; and $\mathbb{I}_A$ is the identity operator on  $\mathcal{H}_A$. The probability $p(a|M_A)$ that Alice obtains an outcome $a$ for measurement $M_A\equiv\{\pi^a_A\}$ is given by a probability measure $\mu:\mathcal{E}(\mathcal{H}_A)\mapsto[0,1]$ that satisfies the following properties : (i) $\forall~\pi^a_A\in\mathcal{E}(\mathcal{H}_A),~0\le\mu(\pi^a_A)\le1,~$ (ii) $\mu(\mathbb{I}_A)=1$, and (iii) $\mu(\sum_i\pi^i_A)=\sum_i\mu(\pi^i_A)$ with $\sum_i\pi^i_A\le \mathbb{I}_A$. The celebrated Gleason-Busch theorem tells that any such generalized probability measure is of the
form $\mu(\pi^a_A)=\tr(\rho_A\pi^a_A)$, for some density operator $\rho_A\in\mathcal{D}(\mathcal{H}_A)$ \cite{Gleason57,Busch03}. Consider now a bipartite system with Hilbert space $\mathcal{H}_A \otimes \mathcal{H}_B$ but restrict the bipartite effect space only to product effects \cite{Klay87,Wallach00,Barnum05}. It has been shown that any probability measure on such product effects, {\it i.e.} $\mu:\mathcal{E}(\mathcal{H}_{A}) \times \mathcal{E}(\mathcal{H}_{B}) \mapsto[0,1]$ satisfying the conditions (i)-(iii) is of the form $\mu(\pi^{a}_B \otimes\pi^{b}_{B})=\tr[W(\pi^{a}_{A}\otimes\pi^{b}_{B})]$ for some Hermitian operator $W$ acting on $\mathcal{H}_{A} \otimes \mathcal{H}_{B}$. Clearly, $W$ is positive over all pure tensors (POPT). Importantly, the set of all POPTs on  $\mathcal{H}_A \otimes \mathcal{H}_B$ is a strict superset of the set of all density operators on $\mathcal{H}_A \otimes \mathcal{H}_B$. However, one can consider generalised toy-theory with local state space quantum but bipartite state space extended upto 'maximal tensor-product' which includes all POPTs as valid bipartite states. \cite{jan,jan1}. Now notice that $ \mathcal{S} : = \mathcal{I}\otimes\mathcal{T}(\ket{\psi^-}\bra{\psi^-})\in \mbox{Herm}\left(\mathbb{C}^2\otimes\mathbb{C}^2\right)$; $\ket{\psi^-}:=(\ket{01}-\ket{10})/\sqrt{2}$ and $\mbox{Herm}(\mathcal{H})$ denotes the set of hermitian operators acting on $\mathcal{H}$ and $\mathcal{T}$ denotes transposition. Straightforward calculation yields negative eigenvalue for $\mathcal{S}$ prohibiting it to be a density operator. However, it is also straightforward to see that $\mathcal{S}$ yields non-negative probabilities on any
product effect and hence it is a POPT state. However, what is crucial is the following fact : $Tr(\mathcal{S} \sigma_z \otimes \sigma_z) = Tr(\mathcal{S} \sigma_x \otimes \sigma_x) = -1 = - Tr(\mathcal{S}\sigma_y \otimes \sigma_y) $. Therefore, $S$ yields anti-correlated outcomes when Alice and Bob makes $\sigma_z$ or $\sigma_x$ measurement and yields anti-correlated outcome of they measure $\sigma_y$. Therefore, if we allow $\mathcal{S}$ as a valid bipartite state, then the proof of Theorem $1$ breaks down since $\mathcal{S}$ serves as a common intersection point in convex hulls of all three scenarios. Therefore, the incompatibility of anti-linear anti-unitary logic with AC depends on what we postulate as the global state space of bipartite system. If we go beyond the quantum bipartite state space and postulate the post-quantum 'maximal tensor product' as the bipartite state space and thereby allow states with all possible combinations of correlations, then AC does not rule out the possibility of anti unitary logic gate. 

{\it Discussions:--} We have derived a novel connection between the quantum mechanical restriction on possible class of reversible manipulations of qubits and the causal structure of spacetime, based on the allowed class of correlations between local observables in valid bipartite state. First we have shown that, although many existing dynamical impossibility ( e.g. no-cloning) can be derived as a consequence of relativistic causality, this is however not the case for reversible quantum gates that are antilinear anti-unitary transformations. In other words, perfect implementation of antilinear anti-unitary gates does not lead to a violation of relativistic causality. Therefore, we go beyond relativistic causality and come up with a generic causal primitive namely the principle of Absoluteness of Cause. We have derived a proof that any formulation of quantum theory where time orders of two spacelike separated local operations does not cause distinct observations in their causal future (thereby respecting the principle of AC), perfect antilinear anti-unitary logic gates on a quantum bit cannot exist, even though these logic gates are perfectly reversible similar to their linear unitary counterpart. Finally, it has been shown that if we go beyond quantum bipartite state space to post-quantum maximal tensor product state space that includes POPTs as valid bipartite state, the contradiction between AC and antilinear gates dissolves. This reveals the intricate connection between allowed class of correlation present in bipartite (or in general, multipartite) quantum states and class of allowed local manipulation of qubits, which can be further investigated.

ADB acknowledges insightful discussions with Manik Banik and Guruprasad Kar.

\newpage
\onecolumngrid
\begin{center}
   \large {{\bf Supplemental: Reversible computation and the causal structure of space-time}}
\end{center}

\section{Violation of AC does not necessarily imply violation of RC}

Consider Alice and Bob share a bipartite quantum state $\rho_{AB}$. Consider two spacelike separated quantum operations $\Lambda_A : \mathcal{D(H_A)}\xrightarrow{} \mathcal{D(H'_A)}$ and $\Lambda_B : \mathcal{D(H_B)}\xrightarrow{} \mathcal{D(H'_B)}$ on system A and B respectively. Alice performs $\Lambda_A$ on her system $A$, Bob performs $\Lambda_B$ on $B$. The time order of these two spacelike separated operations is reference frame dependent.
The principle of AC prohibits the possible time orders
of such spacelike separated operations to be a potential
cause of distinct observations at their common future.

Consider two possible time order of Alice and Bob's local operations.
\begin{itemize}
\item[1.] Alice performing $\Lambda_A$ on her system precedes Bob performing $\Lambda_B$ on his system. The joint state $\rho_{AB}$ will evolve to  $\tau_{AB}$.
\item[2.] Bob performing $\Lambda_B$ on his part precedes Alice performing $\Lambda_A$ on her part. In this case, the joint state $\rho_{AB}$ evolves to  $\tau'_{AB}$
\end{itemize}

If $\tau_{AB} \neq \tau'_{AB}$, AC would be violated. (since, according to AC, time orders of spacelike operations cannot be the cause for distinct future observations). The violation, however, could happen in the following two ways.

\begin{itemize}
\item[(a)] $\tau_{AB}$ and $\tau'_{AB} $ yield different marginal statistics \textit{i.e.} $Tr_A(\tau_{AB}) \neq Tr_A(\tau'_{AB}) $. Such a violation enables instantaneous communication from Alice to Bob. Therefore thus contradicts relativistic causality. Such a violation of AC implies violation of RC.

\item[(b)] However, $\tau_{AB}$ can differ from $\tau'_{AB}$ in a way that both yield identical marginal statistics but still  $\tau_{AB} \neq \tau'_{AB}$ \textit{i.e.} $\tau_{AB}$ and $\tau'_{AB}$
cannot be distinguished neither by Alice nor by Bob
locally. The states can be distinguished only if Alice and Bob come together in their common future and compare their local statistics or perform some kind of global measurement which cannot be locally realisable. Such a violation of AC does not leads to violation of RC, since, unlike the previous case in (a), both the parties need to come together at their causal future to observe this violation.
\end{itemize}

However, neither (a) nor (b) is allowed by quantum theory since (as already discussed in the manuscript), quantum theory ensures $\tau_{AB} = \tau'_{AB}$

\newpage

\section{Different time Orders of spacelike separated operations  }
\begin{figure}[h!]
\centering
\includegraphics[width=.97\textwidth]{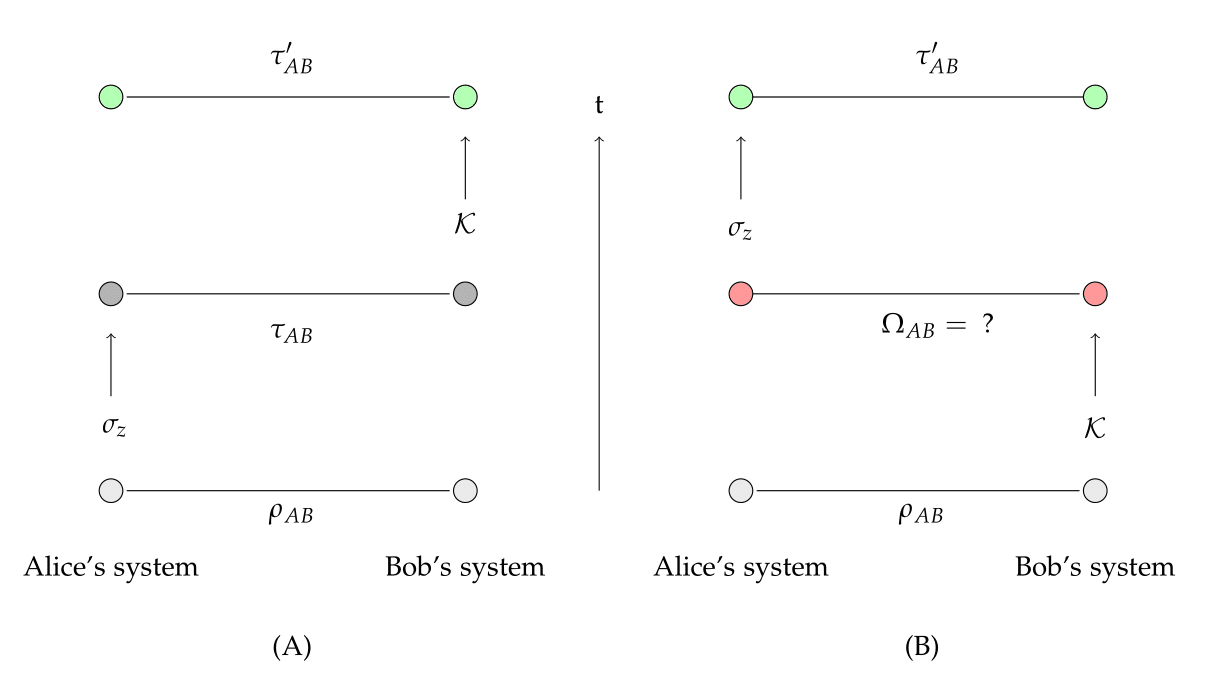}
\caption{(Color online) The figure shows that different possible time orders of two spacelike separated operations yield exactly the same final state $\tau'_{AB}$. Therefore, these time-orderings of spacelike separated operations cannot be the cause for distinct future observations, which is in accordance with the principle of AC (since such time-orderings of spacelike operations are reference frame dependent and hence are not absolute and therefore cannot be the cause for events in their causal future). The figure considers only the case of scenario $1$ in proof of Theorem $1$ where Alice's local operation is $\sigma_z$ measurement on her system and Bob's local measurement is applying complex conjugation gate $\mathcal{K}$ on his system. 
Part \textbf{(A)} considers the time order in which Alice's $\sigma_z$ measurement precedes Bob's application of $\mathcal{K}$. Due to Alice's local operation, the initial bipartite state $\rho_{AB}$ transforms into $\tau_{AB}$. Then due to application of $\mathcal{K}$ by Bob on his system, $\tau_{AB}$ transforms into $\tau'_{AB}$. Therefore, after completion of both of their local operations the final joint state is $\tau'_{AB} $. Part \textbf{(B)} considers the reverse time order where Bob's application $\mathcal{K}$ precedes Alice's $\sigma_z$ measurement. Since we do not assume Schrodinger-type dynamical linearity on superposition of quantum states, the possibility for the evolved joint state $\Omega_{AB}$ is large. However, due to AC principle, there is one crucial condition on $\Omega_{AB}$ which is following : Alice's $\sigma_z$ measurement on $\Omega_{AB}$ must lead to $\tau'_{AB}$ \textit{i.e.} this reverse time ordering of Alice and Bob's operations should lead to exactly the same final state as in the other time ordering of the concerned local operations.}\label{fig1}
\end{figure}

Scenario $2$ and $3$ in the proof of Theorem $1$ also admit similar schematic diagrams.


\begin{thebibliography}{99}

\bibitem{note} Any computational circuit containing irreversible gates can be replaced by an equivalent circuit containing only reversible gates by making use of the reversible Toffoli gate.


\bibitem{Nielsen} M. A. Nielsen , I. L. Chuang ; Quantum Computation and Quantum Information ; Cambridge University Press, 10th Edition, p. 31





\bibitem{Wig} E. P. Wigner, Group theory, Academic Press, New York (1959) 

\bibitem{Peres1} A. Peres , Quantum Theory: Concepts and Methods, Kluwer Academic Publishers, p. 217.

\bibitem{Weinberg} S. Weinberg, The Quantum Theory of Fields (Cambridge
University Press, Cambridge, 1995), Vol. I, p. 91.




\bibitem{Peres2} A. Peres , Nonlinear variants of Schrödinger’s equation violate the second law of thermodynamics, 
\href{https://journals.aps.org/prl/abstract/10.1103/PhysRevLett.63.1114}{Phys. Rev. Lett. 63, 1114 (1989)}

\bibitem{us} The causal primitive has been first introduced in our work, From no causal loop to absoluteness of cause: discarding the quantum NOT logic, A. D. Bhowmik, P. Parashar, G. Kar, M. Banik; \href{https://arxiv.org/abs/2109.09953}{arXiv:2109.09953 [quant-ph]} that discards possibility of a universal flipper. However, in the present work, we show that this causal primitive rules out a whole class of reversible operations.
\bibitem{note1} $\mathcal{D}(X)$ denotes the set of density operators on Hilbert space X.

\bibitem{Brunner} N. Brunner, D. Cavalcanti, S. Pironio, V. Scarani, and S. Wehner ; Bell nonlocality,
\href{https://journals.aps.org/rmp/abstract/10.1103/RevModPhys.86.419}{Rev. Mod. Phys. 86, 419 (2014) }

\bibitem{Gisin} N. Gisin , Quantum cloning without signaling ;
\href{https://www.sciencedirect.com/science/article/abs/pii/S0375960198001704?via%3Dihub}{Phys.Lett. A2 42 (1998)}


\bibitem{Hardy} L. Hardy , D. D. Song ; No signalling and probabilistic quantum cloning ; 
\href{https://www.sciencedirect.com/science/article/abs/pii/S037596019900448X?via%3Dihub}{Phys.Lett. A 259 (1999)}

\bibitem{Pati} A. K. Pati and S. L. Braunstein; Quantum deleting and signalling,
\href{https://doi.org/10.1016/S0375-9601(03)01047-8}{Phys. Lett. A {\bf 315}, 208 (2003)}.

\bibitem{Bey} J. Bae, W. Hwang,  Y. Han ; No-Signaling Principle Can Determine Optimal Quantum State Discrimination,
\href{https://journals.aps.org/prl/abstract/10.1103/PhysRevLett.107.170403}{Phys. Rev. Lett. 107, 170403 (2011)}


\bibitem{Chattopadhyaya06} I. Chattopadhyaya, S. K.Choudhary, G. Kar, S. Kunkri, and D. Sarkar; No-flipping as a consequence of no-signalling and non-increase of entanglement under LOCC,
\href{https://doi.org/10.1016/j.physleta.2005.11.030}{Phys. Lett. A {\bf 351}, 384 (2006)}






\bibitem{GHJW}  N. Gisin, Helv. Phys. Acta 62, 363 (1989); L. P. Hughston,
R. Jozsa, and W. K. Wootters, 
\href{https://www.sciencedirect.com/science/article/abs/pii/0375960193908809?via%3Dihub}{, Phys. Lett. A 183, 14
(1993)}




\bibitem{EPR35} A. Einstein, B. Podolsky, and N. Rosen; Can Quantum-Mechanical Description of Physical Reality Be Considered Complete?
\href{https://doi.org/10.1103/PhysRev.47.777}{Phys. Rev. {\bf 47}, 777 (1935)}.



\bibitem{Gleason57} A. M. Gleason; Measures on the Closed Subspaces of a Hilbert Space,
\href{https://doi.org/10.1512/iumj.1957.6.56050}{J. Math. Mech. {\bf 6}, 885 (1957)}.

\bibitem{Busch03} P. Busch; Quantum States and Generalized Observables: A Simple Proof of Gleason’s Theorem,
\href{https://doi.org/10.1103/PhysRevLett.91.120403}{Phys. Rev. Lett. {\bf 91}, 120403 (2003)}.


\bibitem{Klay87} M. Kläy, C. Randall, and D. Foulis; Tensor Products and Probability Weights, \href{https://doi.org/10.1007/bf00668911}{Int. J. Theor. Phys. {\bf 26}, 199 (1987)}.

\bibitem{Wallach00} N. R. Wallach; An Unentangled Gleason's Theorem,
\href{https://arxiv.org/abs/quant-ph/0002058}{arXiv:quant-ph/0002058}.

\bibitem{Barnum05} H. Barnum, C. A. Fuchs, J. M. Renes, and A. Wilce; Influence-free states on compound quantum systems,
\href{https://arxiv.org/abs/quant-ph/0507108}{arXiv:quant-ph/0507108}.


\bibitem{jan} P. Janotta, C. Gogolin, J. Barrett, N. Brunner,
\href{https://iopscience.iop.org/article/10.1088/1367-2630/13/6/063024}{ New J. Phys. 13 063024}


\bibitem{jan1} P. Janotta, H. Hinrichsen;
Generalized probability theories: what determines the structure of quantum theory? ; 
\href{https://iopscience.iop.org/article/10.1088/1751-8113/47/32/323001}{2014 J. Phys. A: Math. Theor. 47 323001}






\end{thebibliography}
\end{document}